\newtheorem{theorem}{Theorem}[section]
\newtheorem{lemma}[theorem]{Lemma}
\newtheorem{meta-theorem}[theorem]{Meta-Theorem}
\newtheorem{remark}[theorem]{Remark}
\definecolor{darkgreen}{rgb}{0,0.5,0}
\crefname{theorem}{Theorem}{Theorems}
\Crefname{lemma}{Lemma}{Lemmas}
\Crefname{figure}{Figure}{Figures}
\algnewcommand\algorithmicswitch{\textbf{switch}}
\algnewcommand\algorithmiccase{\textbf{case}}
\newcommand{\eps}{\varepsilon}
\newcommand{\poly}{\operatorname{\text{{\rm poly}}}}
\renewcommand{\paragraph}[1]{\vspace{0.15cm}\noindent {\bf #1}:}
\newcommand{\FullOrShort}{full}
  \newcommand{\fullOnly}[1]{#1}
  \newcommand{\shortOnly}[1]{}
    \newcommand{\fullOnly}[1]{}
    \newcommand{\IncludePictures}[1]{}
\begin{document}

\date{}

\title{A Simple Parallel and Distributed Sampling Technique: \\ Local Glauber Dynamics}

\author{
	 Manuela Fischer\\
   ETH Zurich \\
   manuela.fischer@inf.ethz.ch
	\and
	 Mohsen Ghaffari\\
   ETH Zurich \\
   ghaffari@inf.ethz.ch
 }

\maketitle

\setcounter{page}{0}
\thispagestyle{empty}

\begin{abstract}
\emph{Sampling}
 constitutes an important tool in a variety of areas: from machine learning and combinatorial optimization to computational physics and biology.  
A central class of sampling algorithms is the \emph{Markov Chain Monte Carlo} method, based on the construction of a Markov chain with the desired sampling distribution as its stationary distribution. Many of the traditional Markov chains, such as the \emph{Glauber dynamics}, do not scale well with increasing dimension.
To address this shortcoming, we propose a simple local update rule based on the Glauber dynamics that leads to efficient parallel and distributed algorithms for sampling from Gibbs distributions. 
\medskip

Concretely, we present a Markov chain that mixes in $O(\log n)$ rounds when Dobrushin's condition for the Gibbs distribution 
 is satisfied. 
This improves over the \emph{LubyGlauber} algorithm by Feng, Sun, and Yin [PODC'17], which needs $O(\Delta \log n)$ rounds, and their \emph{LocalMetropolis} algorithm, which converges in $O(\log n)$ rounds but requires a considerably stronger mixing condition. Here, $n$ denotes the number of nodes in the 
graphical model inducing the Gibbs distribution, and $\Delta$ its maximum degree. In particular, our method can sample a uniform proper coloring with $\alpha\Delta$ colors in $O(\log n)$ rounds for any $\alpha >2$, which almost matches the threshold of the sequential Glauber dynamics and improves on the $\alpha>2 + \sqrt{2}$ threshold of Feng et al. 
\end{abstract}

\newpage

\section{Introduction}

\emph{Locally Checkable Labeling} (LCL) \cite{naor1995localicty} problems have been studied extensively for more than three decades \cite{linial1987LOCAL}. Sampling from the solution space of such LCLs, however, has not attracted a lot of attention and has been investigated only by a recent work \cite{feng2017can}, despite its numerous motivations, which we will outline in the following. 

\paragraph{Markov Chain Monte Carlo Method} The \emph{Markov Chain Monte Carlo (MCMC)} method is a central class of algorithms for \emph{sampling}, that is, for randomly drawing an element from a ground set according to a certain probability distribution. It works by constructing a Markov chain with the targeted sampling distribution as its stationary distribution. Within a number of steps, known as the mixing time, the Markov chain converges; its state then (approximately) follows this distribution.
Besides the intrinsic interest of such a general sampling method, in particular for complex distributions where simple sampling techniques fail, the MCMC method gives rise to efficient approximation algorithms in a variety of areas: enumerative combinatorics (due to the fundamental connection between sampling and counting established by Jerrum, Valiant, and Vazirani \cite{jerrum1986random}), simulated annealing \cite{nahar1986simulated} in combinatorial optimization, Monte Carlo simulations \cite{metropolis1953equation} in statistical physics, computation of intractable integrals for, among many others, Bayesian inference \cite{andrieu2003introduction} in machine learning.

\paragraph{Parallel and Distributed Sampling}
The employment of MCMC methods is particularly important when confronted with high-dimensional data, where traditional (exact) approaches quickly become intractable.   
Such data sets are not only increasingly frequent, but also critical for the success of many applications. For instance in machine learning, higher-dimensional models help expressability and hence predictability. 
It is thus central that MCMC algorithms scale well with increasing dimensions. 
This is not the case, however, for most sequential methods, as they process and update the variables one by one, that is, a single site per step. 
To speed up the sampling process, Markov chain updates can be parallelized by spreading the variables across several processors. 
In other settings, such as distributed machine learning, the (data associated to) variables might already be naturally distributed among several machines, and the overhead of aggregating them into one machine, if they fit there in the first place, would be untenable. 

\paragraph{Local Sampling} In either case, to avoid overhead in communication and coordination, 
local update rules for Markov chains are needed: a machine must be able to change the value of its variables without knowing all the values of the variables on other machines. Yet, the joint distribution, over all variables in the system, must converge to a certain distribution. This local sampling problem was introduced in a recent work by Feng, Sun, and Yin \cite{feng2017can}, whose title asks 
\emph{``What can be sampled locally?''}. 
We address this question by providing a simple and generic sampling technique---the \emph{Local Glauber Dynamics}, informally introduced in \Cref{sec:technique} and formally described in \Cref{sec:LocalGlauber}---which is applicable for a wide range of distributions, as stated in \Cref{sec:results}. This moves us a step closer towards an answer of this question, thus towards the goal of generally understanding what can be sampled locally. Besides its many practical ramifications, especially on the area of distributed machine learning, this gives us a theoretical insight about the \emph{locality} of problems, whose systematic study has been initiated by the seminal works of Linial \cite{linial1987LOCAL} and Naor and Stockmeyer \cite{naor1995localicty} with the pithy title \emph{``What can be computed locally?''}. 

\subsection{Our Result, and Related Work}\label[section]{sec:results}

For the sake of succinctness and comprehensibility of the presentation, we state and prove our main result in terms of the special case that gets most attention for sequential sampling: sampling proper colorings of a graph.
We refer to \cite{frieze2007survey} for a survey on sequential sampling of proper colorings. Our result applies to a more general set of distributions, however, as explained in the remark at the end of this section. Note that independently and simultaneously, Feng, Hayes, and Yin \cite{feng2018distributed} arrived at the same result. 

\begin{theorem}\label{thm:col}
A uniform proper $q$-coloring of an $n$-node graph with maximum degree $\Delta$ can be sampled within total variation distance $\eps>0$ in $O\left(\log \left(\frac{n}{\eps}\right)\right)$ rounds, where $q=\alpha\Delta$ for any $\alpha>2$. 
\end{theorem}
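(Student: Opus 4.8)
The plan is to realize the Local Glauber Dynamics as a Markov chain whose single step is a one-round synchronous local update, to confirm that its unique stationary distribution is the uniform distribution over proper $q$-colorings, and then to bound its mixing time by a path-coupling argument (Bubley--Dyer) whose one-step contraction is governed exactly by the threshold $q>2\Delta$, i.e.\ $\alpha>2$.

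First I would fix the update rule. In each round every vertex $v$ independently draws a proposal $c_v$ uniformly from the palette $[q]$ and then \emph{adopts} it, setting its new color to $c_v$, precisely when $c_v$ is locally safe---differing from both the current and the proposed colors of every neighbor---and otherwise retains its old color $X_v$. Two facts must be checked. For \emph{properness}: if two adjacent vertices $v,u$ both adopt, their safety conditions force $c_v\neq c_u$; if only $v$ adopts, safety forces $c_v\neq X_u$; hence proper colorings map to proper colorings and the chain is supported on them. For \emph{stationarity}: I would establish detailed balance, showing the transition kernel restricted to proper colorings is symmetric---if necessary after composing the proposal with a symmetric (Metropolis-type) acceptance coin to correct the kernel---so that the uniform distribution over proper colorings is stationary; together with irreducibility this identifies it as the unique stationary law. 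Each step is implementable in a single round of \local (and, with the bounded palette, \congest), since a vertex's decision depends only on its neighbors' current and proposed colors.

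The core is the mixing bound via path coupling on the Hamming metric $d$. I consider two copies $X,Y$ of the chain that differ at exactly one vertex $w$ and, as a first attempt, couple them with identical proposals at every vertex. Because $X$ and $Y$ agree off $w$, the safety test of any vertex $u\neq w$ can differ between the copies only when $u$ is a neighbor of $w$ and its proposal falls in the two-color set $\{X_w,Y_w\}$. The disagreement at $w$ is healed whenever $w$ adopts, which happens with probability at least $(q-2\Delta)/q$ since at most $2\Delta$ colors---the current and proposed colors of its $\le\Delta$ neighbors---are blocked; a new disagreement can arise only at a neighbor $u$ of $w$, each with probability at most $2/q$. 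A crude accounting therefore gives
\[
\E\!\left[d(X',Y')\right]\;\le\;\frac{2\Delta}{q}\;+\;\Delta\cdot\frac{2}{q}\;=\;\frac{4\Delta}{q},
\]
which is a genuine contraction only for $q>4\Delta$.

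The main obstacle is to sharpen this one-step estimate down to the claimed threshold $q>2\Delta$. This requires two refinements working together: a priority-based tie-break among clashing proposals, so that when two neighbors propose the same color only one must defer and the effective number of blocked colors at $w$ shrinks toward $\Delta$ rather than $2\Delta$; and a maximal coupling of the proposals at each neighbor $u$ of $w$, which drives the per-neighbor disagreement probability toward $1/(q-\Delta)$ rather than $2/q$, mirroring the classical sequential analysis. Carried out carefully, this yields $\E[d(X',Y')]\le 1-\delta$ for a constant $\delta=\Omega(1)$ whenever $\alpha>2$. Since any two colorings lie at Hamming distance at most $n$, path coupling then bounds the total variation distance after $t$ rounds by $n(1-\delta)^t$, which falls below $\eps$ once $t=O\!\left(\tfrac1\delta\log\tfrac n\eps\right)=O\!\left(\log\tfrac n\eps\right)$, establishing the theorem.
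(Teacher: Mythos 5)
Your proposal sets up the right framework (reversibility for stationarity, Bubley--Dyer path coupling for mixing), but it omits the single idea that the paper's proof actually hinges on, and the two ``refinements'' you invoke to close the gap from $\alpha>4$ to $\alpha>2$ would not survive scrutiny. The paper's chain is \emph{not} the fully parallel dynamics you describe: each node first marks itself with a small constant probability $\gamma=\gamma(\alpha)$ and only marked nodes propose. This is what makes the accounting work. The healing probability at the disagreeing vertex becomes $\approx \gamma(1-1/\alpha)$, while the damage from each neighbor requires that neighbor to be marked \emph{and} to propose one specific color, costing $\gamma/q$ each, so the total damage is $\approx \gamma/\alpha$ plus higher-order terms in $\gamma$; the one-step drift is $1-\gamma(1-2/\alpha)+O(\gamma^2)$, which is a contraction for every $\alpha>2$ once $\gamma$ is small. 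In your fully parallel chain both healing and damage are $\Theta(1)$ and the constants genuinely do not reach $2$: this is essentially the LocalMetropolis regime, which the paper (quoting Feng, Sun, and Yin) argues is stuck at $\alpha>2+\sqrt 2$ even on trees. So the threshold you need is not reachable by sharpening your estimate; it requires changing the chain.

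The two repairs you sketch also have concrete problems. A priority-based tie-break among clashing proposals is a different transition kernel, and it breaks the symmetry that gives detailed balance --- the paper explicitly notes that rejecting on proposal--proposal clashes symmetrically (the condition $c_v\notin\{X_u,c_u\}$ \emph{and} $c_u\notin\{X_v,c_v\}$) is needed for reversibility, so your stationary distribution would no longer be uniform. A maximal (non-identity) coupling of proposals at the neighbors of $w$ creates vertices whose \emph{proposals} differ between the two chains, and in a synchronous round such a proposal disagreement can cause acceptance decisions to differ at vertices at distance $2$ from $w$, then $3$, and so on; your accounting only charges damage at distance $1$. Handling this cascade is precisely what the paper's breadth-first ``mirrored'' coupling and the flip-path lemmas do: a disagreement at distance $\ell$ requires a path of $\ell$ marked vertices each proposing red or blue, contributing a geometric series $\sum_\ell \Delta^\ell(2\gamma/q)^{\ell-1}(\gamma/q)$ that is summable and small only because of the marking probability. (A smaller issue: your healing bound $(q-2\Delta)/q$ ignores that $w$'s update is also vetoed when a neighbor proposes $w$'s \emph{current} color, an event of constant probability in the unmarked chain.) In short, the skeleton is right but the load-bearing ideas --- the lazy marking rate $\gamma$ and the flip-path analysis of cascading proposal disagreements --- are missing.
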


Our parallel and distributed sampling algorithm improves over the \emph{LubyGlauber} algorithm by Feng, Sun, and Yin \cite{feng2017can}, which needs $O\left(\Delta \log \left(\frac{n}{\eps}\right)\right)$ rounds, and their \emph{LocalMetropolis} algorithm, which converges in $O\left(\log \left(\frac{n}{\eps}\right)\right)$ rounds but requires a considerably stronger mixing condition of $\alpha > 2+\sqrt{2}$. They state that \emph{``We also believe that the $2+\sqrt{2}$ threshold is of certain significance to this [LocalMetropolis] chain as the Dobrushin's condition to the Glauber dynamics.''}, thus implying that this value is a barrier for their approach. This is also justified by the supposedly easiest special case of a tree that leads to the same threshold for their algorithm. Our result gets rid of the additional $\sqrt{2}$ while not incurring any loss in the round complexity, with a considerably easier and more natural update rule. 
Not only our proof is simpler and shorter, also our algorithm is asymptotically best possible, as there is an $\Omega\left(\log\left(\frac{n}{\eps}\right)\right)$ lower bound \cite{guo2016uniform,feng2017can} due to the exponential correlation between variables. 

The threshold of $\alpha >2 $ corresponds to Dobrushin's condition, thus almost matches the threshold of the sequential Glauber dynamics \cite{jerrum1995very,salas1997absence} at $2\Delta+1$. In other words, we present a technique that fully parallelizes the Glauber dynamics, speeding up the mixing time from $\poly n$ steps to $O(\log n)$ rounds. In terms of number of colors needed, Dobrushin's condition can be undercut: Vigoda \cite{vigoda2000improved} and two very recent works \cite{chen2018linear, delcourt2018rapid} showed that, when resorting to a different highly non-local Markov chain, $\alpha=\frac{11}{6}$ is enough. This gives rise to the question whether efficient distributed algorithms intrinsically need to be stuck at Dobrushin's condition, which would imply that this bound is inherent to the locality of the sampling process, or whether our threshold is an artifact of our possibly suboptimal dynamics. 

\begin{remark}\label[remark]{remark:fullGenerality}
In fact, our technique directly applies for sampling from the Gibbs distribution induced by a Markov random field\footnote{This captures many graph problems---such as independent set, vertex cover, graph homomorphism---and physical models---such as Ising model, Potts model, general spin systems, and hardcore gas model.} if Dobrushin's condition \cite{dobruschin1968description} is satisfied. More generally, it can used for sampling from any local (that is, constant-radius) constraint satisfaction problems, which is universal for conditional independent joint distributions, due to Hammersley-Clifford's fundamental theorem \cite{hammersley1971markov}. Moreover, our proof presented here captures all the difficulties that arise in these more general cases, thus can be adapted in a straight-forward manner. We defer this generalization to the full version of the paper. 
\end{remark}



\subsection{Our Sampling Technique, and Related Approaches}\label[section]{sec:technique}

Over the past few years, several methods to parallelize sequential Markov chains have been proposed. Most of them rely on a heavy coordination machinery, are special purpose, and/or do not provide any theoretical guarantees. In the following, we briefly outline two of the most promising and more generic parallel and distributed sampling techniques, in the context of colorings. 
  
The most natural one follows a standard decentralization approach, also implemented in the \emph{LubyGlauber} algorithm by \cite{feng2017can}: an independent set of nodes (e.g., a color class of a proper coloring) simultaneously updates their color \cite{feng2017can}, ensuring that no two neighboring nodes change their color at the same time. This approach mainly suffers from the limitation that the number of independent sets needed to cover all nodes might be large, which slows down mixing. In particular, a multiplicative $\Delta$-term in the mixing time seems inevitable \cite{gonzalez2011parallel,feng2017can}. In the worst case of a clique, this approach falls back to sequential sampling, updating one node after the other. Moreover, this method requires an independent set to be computed, which incurs a significant amount of additional communication and coordination.

An orthogonal direction was pursued by \cite{newman2008distributed,yan2009parallel,feng2017can}, where methods are introduced to update the colors of all nodes simultaneously. One example is the \emph{LocalMetropolis} algorithm by \cite{feng2017can}. This extreme parallelism, however, comes at a cost of either introducing a bias in the stationary distribution, resulting in a non-uniform coloring \cite{newman2008distributed, yan2009parallel}, or having to demand stronger mixing conditions \cite{feng2017can}.

\paragraph{Our Local Sampling Technique} 
We aim for the middle ground between these two approaches, motivated by the following observation: we do not need to prevent simultaneous updates of adjacent nodes, only simultaneous \emph{conflicting} updates of adjacent  nodes.
Preventing two adjacent nodes in the first place from picking a new color in the same round seems to be way too restrictive, in particular because it is unlikely that both nodes aim for the same new color. On the other hand, if all nodes update their colors simultaneously, a node is expected to have a conflict with at least one of its neighbors, which prevents progress. 

We interpolate between the two extreme cases by introducing a marking probability, so that only a small fraction of a node's neighbors is expected to update the color, and hence also, in worst case, only these can conflict with its update. 
Concretely, we propose the following generic sampling method, which we call  \emph{Local Glauber Dynamics}: In every step, every variable independently marks itself at random with a certain (low) probability. If it is marked, it samples a proposal at random and checks with its neighbors whether the proposal leads to a conflict with their current state or their new proposals (if any). If there is a conflict, the variable rolls back and stays with its current state, otherwise the state is updated. As opposed to sequential sampling, where only one variable per step updates its value, here the expected number of variables simultaneously updating their value is $\Omega(n)$, resulting in the desired speed-up from $O(n\log n)$, say, to $O(\log n)$. Of course, the main technical aspect lies in showing  that this simple update rule converges to the uniform distribution in $O(\log n)$ rounds, which we prove in \Cref{sec:MarkovChain}.

\subsection{Notation and Preliminaries}
\paragraph{Model} We work with the standard distributed message-passing model for the study of \emph{locality}: the $\mathsf{LOCAL}$ model introduced by Linial \cite{linial1987LOCAL}, defined as follows. Given a graph $G=(V,E)$ on $n$ nodes  with maximum degree $\Delta$, the computation proceeds in rounds. In every round, every node can send a message to each of its neighbors. We do not limit the message sizes, but for the algorithm that we present, $O(\log n)$-bit messages suffice. In the end of the computation, every node $v$ outputs a color. The quantity of main interest is the round complexity, i.e., the number of rounds until the joint output of all nodes satisfies a certain condition. We assume that all nodes have knowledge of $\log n$ and $\Delta$.

\paragraph{Markov Chain}
We consider a Markov chain $\textbf{X}=\left(X^{(t)}\right)_{t \geq 0}$, where $X^{(t)}=\left(X^{(t)}_v\right)_{v\in V}\in [q]^V$ is the coloring of the graph in round $t$. We will omit the round index, and use $X=(X_v)_{v\in V}\in [q]^V$ for the coloring at time $t$ and $X'=(X'_v)_{v\in V}\in [q]^V$ for the coloring at time $t+1$, for a $t \geq 0$, instead. 

\paragraph{Mixing Time}
For a Markov chain $\left(X^{(t)}\right)_{t \geq 0}$ with stationary distribution $\mu$, let $\pi_{\sigma}^{(t)}$ denote the distribution of the random coloring $X^{(t)}$ of the chain at time $t\geq 0$, conditioned on $X^{(0)}=\sigma$. The mixing time $\tau_{\text{mix}}(\eps)=\max_{\sigma \in \Omega} \min \left\{ t \geq 0 \colon  d_{\text{TV}}\left(\pi_{\sigma}^{(t)},\mu\right)\right\}$ is defined to be the minimum number of rounds needed so that the Markov chain is $\eps$-close (in terms of total variation distance) to its stationary distribution $\mu$, regardless of $X^{(0)}$. The total variation distance between two distributions $\mu,\nu$ over $\Omega$ is defined as
$d_{\text{TV}}(\mu,\nu)=\sum_{\sigma \in \Omega} \frac{1}{2} \left|\mu(\sigma)-\nu(\sigma)\right|$.

\paragraph{Path Coupling}
The \emph{Path Coupling Lemma} by Bubley and Dyer \cite[Theorem 1]{bubley1997path} (also see \cite[Lemma 4.3]{feng2017can}) gives rise to a particularly easy way of designing couplings. In a simplified version, it says that it is enough to define the coupling of a Markov chain only for pairs of colorings that are adjacent, that is, differ at exactly one node. The expected number of differing nodes after one coupling step then can be used to bound the mixing time of the Markov chain.  
\begin{lemma}[Path Coupling \cite{bubley1997path}, simplified]\label[lemma]{pathcoupling}
For $\sigma,\sigma' \in [q]^V$,  let $\phi(\sigma,\sigma'):=|\{v \in V \colon \sigma_v \neq \sigma'_v\}|$.
If there is a coupling $(X,Y)\rightarrow (X',Y')$ of the Markov chain, defined only for $(X,Y)$ with $\phi(X,Y)=1$, that satisfies $\mathbb{E}[\phi(X',Y') \mid X,Y]\leq 1- \delta$ for some $0< \delta < 1$, then $\tau_{\text{mix}}(\eps)=O\left(\frac{1}{\delta}\cdot\log\left(\frac{n}{\eps}\right)\right)$.  
\end{lemma}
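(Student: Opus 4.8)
The plan is to exploit that $\phi$ is the Hamming metric on $[q]^V$, which is a \emph{path metric}: any two colorings $\sigma,\sigma'$ at distance $\phi(\sigma,\sigma')=d$ can be joined by a chain $\sigma=Z_0,Z_1,\dots,Z_d=\sigma'$ of colorings in which each consecutive pair $(Z_i,Z_{i+1})$ is \emph{adjacent}, i.e.\ $\phi(Z_i,Z_{i+1})=1$ (just recolor the $d$ disagreeing nodes one at a time), and the maximum value of $\phi$ is $n$. First I would use this structure to lift the given one-step coupling, which is only defined on adjacent pairs, to a global coupling on every pair of colorings; then establish a one-step contraction in expectation; then iterate it over time; and finally convert the resulting bound on $\mathbb{E}[\phi]$ into a total variation bound via the coupling inequality.

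For the lifting step, given arbitrary $(X,Y)$ with $\phi(X,Y)=d$, fix a geodesic $X=Z_0,\dots,Z_d=Y$ as above and apply the hypothesized coupling to each adjacent link $(Z_i,Z_{i+1})$ to obtain coupled one-step images $(Z_i',Z_{i+1}')$, gluing successive couplings so that the copy of $Z_{i+1}$ playing the ``right end'' of link $i$ and the ``left end'' of link $i+1$ evolves identically. This produces a simultaneous coupling of all the $Z_i'$, and in particular a valid coupling of $(X',Y')=(Z_0',Z_d')$. By the triangle inequality for $\phi$ and linearity of expectation,
\[
\mathbb{E}[\phi(X',Y')\mid X,Y]\le \sum_{i=0}^{d-1}\mathbb{E}[\phi(Z_i',Z_{i+1}')\mid Z_i,Z_{i+1}]\le d(1-\delta)=(1-\delta)\,\phi(X,Y),
\]
where the middle inequality is exactly the hypothesis applied to each adjacent link.

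Iterating this contraction (taking expectations and using the tower rule over $t$ steps) yields $\mathbb{E}[\phi(X^{(t)},Y^{(t)})]\le (1-\delta)^t\,\phi(X^{(0)},Y^{(0)})\le (1-\delta)^t n\le e^{-\delta t}n$ for the coupled chains started from any pair of colorings. Since $\phi$ is a nonnegative integer that is at least $1$ whenever $X^{(t)}\ne Y^{(t)}$, Markov's inequality gives $\Pr[X^{(t)}\ne Y^{(t)}]\le \mathbb{E}[\phi(X^{(t)},Y^{(t)})]\le e^{-\delta t}n$. The coupling inequality bounds $d_{\text{TV}}(\pi_{\sigma}^{(t)},\mu)$ by the probability that two coupled copies, one started at $\sigma$ and one at the stationary distribution $\mu$, have not yet coalesced, so $d_{\text{TV}}(\pi_{\sigma}^{(t)},\mu)\le e^{-\delta t}n$ uniformly in $\sigma$. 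Choosing $t\ge \frac{1}{\delta}\log(n/\eps)$ makes the right-hand side at most $\eps$, giving $\tau_{\text{mix}}(\eps)=O\!\left(\frac{1}{\delta}\log(n/\eps)\right)$.

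The step I expect to be the main obstacle is the lifting/gluing of couplings: one must verify that composing the adjacent-pair couplings along the geodesic really yields a legitimate coupling of the Markov chain on $(X,Y)$, i.e.\ that the marginal on each $Z_i'$ is the correct one-step distribution and that the shared endpoint configurations of neighboring links can be coupled consistently. This is precisely the content of the Bubley--Dyer argument, and once the gluing is justified the metric/contraction bookkeeping and the final Markov-plus-coupling-inequality conversion are routine.
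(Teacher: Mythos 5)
The paper does not prove this lemma at all; it is imported verbatim from Bubley and Dyer \cite{bubley1997path}, so there is no in-paper argument to compare against. Your reconstruction is the standard path-coupling proof (geodesic decomposition in the Hamming metric, gluing the adjacent-pair couplings, iterated contraction, then Markov's inequality plus the coupling inequality) and is correct, with the one genuinely delicate point---that composing couplings along the geodesic yields a legitimate coupling with the right marginals---properly identified rather than glossed over.
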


\section{Local Glauber Dynamics}\label[section]{sec:MarkovChain}

\paragraph{Local Glauber Dynamics}\label[section]{sec:LocalGlauber}
We define a transition from $X=(X_v)_{v\in V}$ to $X'=(X_v')_{v \in V}$ in one round as follows. 
Every node $v \in V$ marks itself independently with probability $0<\gamma < 1$. If it is marked, it proposes a new color $c_v \in [q]$ uniformly at random, independently from all the other nodes. If this proposed color does not lead to a conflict with the current and the proposed colors of any neighbor, that is, $c_v \notin \bigcup_{u \in N(v)} \{X_u, c_u\}$ and $c_u \notin \{X_v, c_v\}$ for any $u \in N(v)$\footnote{To simplify notation, we assume that $c_u=X_u$ in case $u$ is not marked.}, then $v$ accepts color $c_v$, thus sets $X'_v=c_v$. Otherwise, $v$ keeps its current color, that is, sets $X'_v=X_v$. Note that the condition $c_v \notin \bigcup_{u \in N(v)} \{X_u, c_u\}$ is necessary to guarantee reversibility of the Markov chain.

\paragraph{Stationary Distribution}
The local Glauber dynamics is ergodic: it is aperiodic, as there is always a positive probability of not changing any of the colors, and irreducible, since any (proper) coloring can be reached from any coloring.
Moreover, the chain might possibly start from an improper coloring, but it will never move from a proper to an improper coloring, that is, it is absorbing to proper colorings. 
It is easy to verify that this local Glauber dynamics, due to its symmetric update rule, satisfies the detailed balance equation for the uniform distribution, meaning that the transition from $X$ to $X'$ has the same probability as a transition from $X'$ to $X$ for proper colorings. The chain thus is reversible and has the uniform distribution over all proper colorings as unique stationary distribution. 

\paragraph{Mixing Time} Informally speaking, the Path Coupling Lemma says that if for all $X$ and $Y$ which differ in one node, we can define a coupling $(X,Y) \rightarrow (X',Y')$ in such a way that the expected number of nodes at which $X'$ and $Y'$ differ is bounded away from 1 from above, then the chain converges quickly. 
In \Cref{OSCoupling}, we formally describe such a path coupling, in \Cref{Properties}, we list necessary (but not necessarily sufficient) conditions for a node to have two different colors after one coupling step, which is then used in \Cref{EDiff} to bound the expected number of differing nodes by $1-\delta$ for some constant $0<\delta<1$, depending on $\alpha$. 
Application of \Cref{pathcoupling} then 
concludes the proof of \Cref{thm:col}.

\subsection{Description of Path Coupling}\label[section]{OSCoupling}

We look at two colorings $X$ and $Y$ that differ at a node $v_0\in V$ only. That is, $r=X_{v_0}\neq Y_{v_0}=b$, for some $r\neq g\in [q]$, which we will naturally refer to as red and blue, respectively, and $X_{v}=Y_{v}$ for all $v\neq v_0 \in V$. 
In the following, we explain how every node $v\in V$ comes up with a pair $(c_v^X,c_v^Y)$ of new proposals, which then will be accepted or rejected based on the local Glauber dynamics rules. 

\paragraph{Marking} In both chains, every node $v \in V$ is marked independently with probability $\gamma$, using the same randomness in both chains. In the following, we restrict our attention to marked nodes only; non-marked nodes are thought of proposing their current color as new color, i.e., $c_v^X=X_v$ and $c_v^Y=Y_v$.

\paragraph{Consistent, Mirrored, and Flipped Proposals} We introduce two possible ways of how proposals for a node $v$ can be sampled: \emph{consistently} and \emph{mirroredly}. For the consistent proposals, both chains propose the same randomly chosen color, that is, $c_{v}^X=c_{v}^Y=c$ for a u.a.r. $c \in [q]$. For the mirrored proposals, both chains assign the same random proposal if it is neither red nor blue, and a \emph{flipped} proposal (i.e., red to one and blue to the other chain) otherwise. More formally, $c_{v}^X=c$ and $c_{v}^Y=\overline{c}$  if $c\in \{r,b\}$ and $\overline{c}$ the element in $\{r,b\}\setminus\{c\}$, and $c_{v}^X=c_{v}^Y=c$ if $c\notin \{r,b\}$, for a u.a.r. $c \in [q]$. We say that $v$ has \emph{flipped} proposals if $c_{v}^X\neq c_{v}^Y$. Note that we say mirrored proposal to refer to the process of sampling mirroredly, and we say flipped if, as a result of sampling mirroredly, a node proposes different colors in the two chains.

\paragraph{Breadth-First Assignment of Proposals}
Let $B=\left\{ v \in V \setminus \{v_0\} \colon X_v \in \{r, b\}\right\} \subseteq V\setminus \{v_0\}$ be the set of nodes $v\neq v_0$ with current color red or blue, as well as $K=\left(\bigcup_{v \in B} N^+(v)\right)\setminus\{v_0\}$ its inclusive neighborhood, without $v_0$, where $N^+(v):=N(v)\cup\{v\}$. 
We ignore this set $K$ for the moment, and focus on the set $S\subseteq V \setminus K$ of marked nodes that are not adjacent to a node with color red or blue  (except for possibly $v_0$). Informally speaking, we will go through these nodes in a breadth-first manner, with increasing distance $d \geq 0$ to node $v_0$, and fix their proposals layer by layer, but defer the assignment of nodes not (yet) adjacent to a node with flipped proposals, as follows. We repeatedly add all (still remaining) nodes that have a node in the last layer with flipped proposals to a new layer, and sample their proposals mirroredly, thus perform a breadth-first assignment on nodes with flipped proposals only. All remaining nodes sample their proposals consistently. Note that this in particular guarantees that a node is sampled consistently only if it not adjacent to a node with flipped proposals. 

More formally, this can be described as follows. We define $M^0=F^0=\{v_0\}$, even if $v_0$ is not marked. 
For node $v_0$, if marked, the proposals are sampled consistently. For $d \geq 1$ and $v \in M^d$, the proposals are sampled mirroredly. For the subsequent layer, we restrict the attention to (new) neighbors of nodes in $M^d$ with flipped proposals only, i.e., consider $M^{d+1}=N\left(F^d\right)\setminus \bigcup_{i=0}^d M^d$ for $F^d=\{v \in M^d \colon c_{v}^X \neq c_{v}^Y\}$.

For all remaining (marked) nodes, that is, nodes in $S\setminus M$ and nodes in $K$, proposals are sampled consistently.
See \Cref{figg1} for an illustration of this breadth-first-based approach.  

\paragraph{Accept Proposals}
The proposals $(c_v^X)_{v \in V}$ and $(c_v^Y)_{v \in V}$ in the chains $X$ and $Y$ are accepted or rejected based on the local Glauber dynamics rules, leading to colorings $X',Y'\in [q]^V$. 

\begin{figure}
		\begin{center}
			\includegraphics[width=0.7\textwidth]{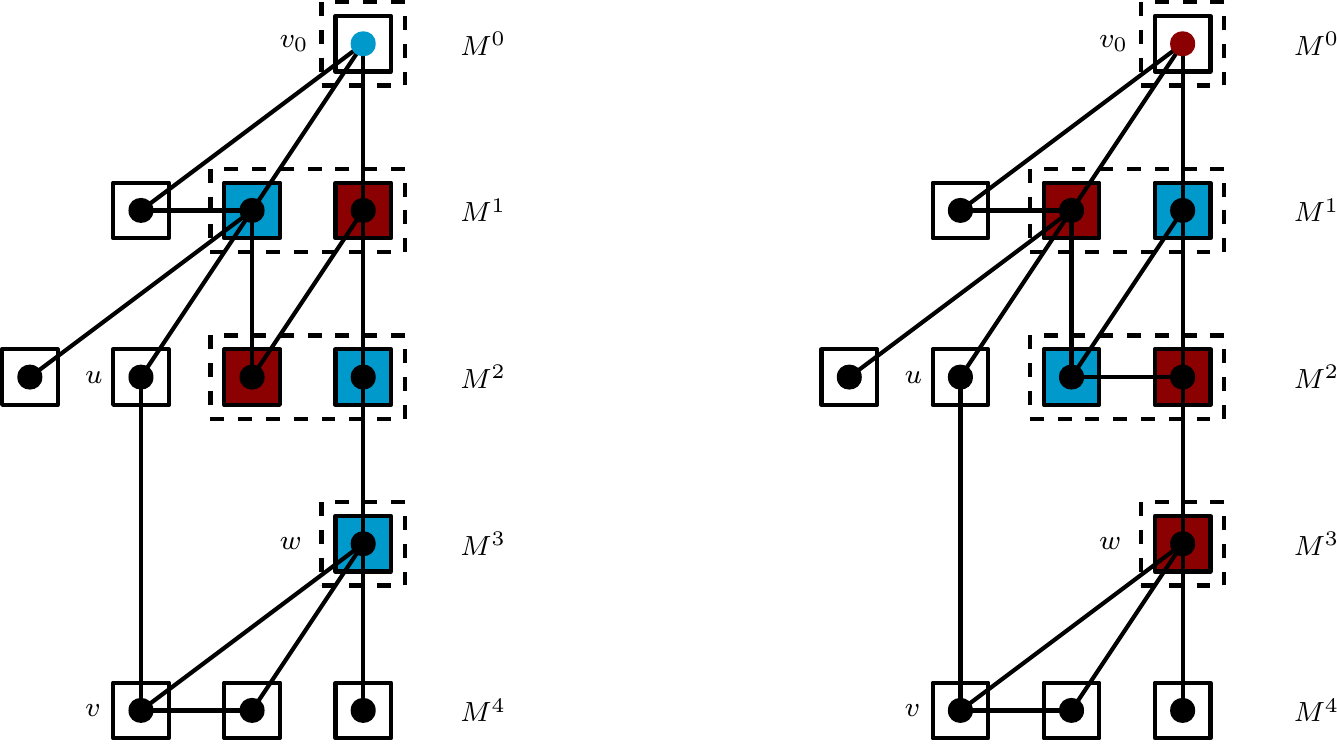}
		\end{center}
	\caption{The breadth-first layers $M^d$ for $d\geq 0$ of two chains that differ at $v_0\in M^0$. The disk color corresponds to the node's current color, where black means any color except red and blue. The color of the box around a node shows this node's proposed color, where white stands for any color (possibly also red or blue, but consistent). Dashed boxes indicate the sets $F^d$ of nodes with flipped proposals. Note that node $v$ appears in layer 4 even though it has distance 3 to $v_0$. This is because we perform the breadth-first assignment only on nodes with flipped proposals. $v$'s neighbor $u$ does not have flipped proposals, thus is in $M^2\setminus F^2$, which means that $u$'s neighbors are not added to the next layer. Only $v$'s neighbor $w\in F^3$ leads to $v$ being added to $M^4$.  
	}\label{figg1}
\end{figure}

\subsection{Properties of the Coupling}\label[section]{Properties}
The main observation is the following. If we ignore nodes with current colors red and blue for the moment, one can argue that $X'$ and $Y'$ can only differ at a node different from $v_0$ if its proposals are flipped. Flipped proposals, however, can only arise when the proposals are sampled mirroredly, which happens only if there is a node in the preceding layer with flipped proposals (due to the breadth-first order in which we assign the proposals). A node thus can lead to an inconsistency only if there is path in $G$ from $v_0$ to this node consisting of nodes with flipped proposals, called a \emph{flip path}. 

We will next make this intuition with the flip paths more precise, in two parts: for nodes in $S$ (that sample their proposals mirroredly if adjacent to a node with flipped proposals) in \Cref{FlipPath} and for nodes in $K$ (that always sample their proposals consistently) in \Cref{FlipPath2}. See \Cref{fig2} for an illustration of these two cases.

\begin{lemma}\label[lemma]{FlipPath}
If $X'$ and $Y'$ differ at $v\neq v_0 \in S$, there is a flip path $(v_0, \dotsc, v_{\ell}=v)\in F^0\times \dotsb \times F^{\ell}$ of length $\ell\geq 1$ in $G$, with the additional property that the proposal of $v$ is the opposite of the last color (in red and blue) seen on this path, in both chains. More formally, $c_Y=c_v^X\neq c_v^Y=c_X$, where $c_X=c_{v_{\ell-1}}^X$ and $c_Y=c_{v_{\ell-1}}^Y$ if $\ell>1$, and  $c_X=X_{v_0}$ and $c_Y=Y_{v_{0}}$ if $\ell=1$. 
\end{lemma}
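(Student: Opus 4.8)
The plan is to prove the two claims---the existence of a flip path and the color relationship at its endpoint---after first reducing to the case in which $v$ itself has flipped proposals. Throughout I would use the basic consequences of $v\in S$: since $S\cap K=\emptyset$ and $v\neq v_0$, the node $v$ is neither in $B$ nor adjacent to $B$, so $X_v=Y_v\notin\{r,b\}$ and every neighbor $u\neq v_0$ of $v$ satisfies $X_u=Y_u\notin\{r,b\}$.

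The first and main step is to show that $X'_v\neq Y'_v$ forces $v\in F^\ell$ for some $\ell\geq 1$, which I would do by proving the contrapositive: if $c_v^X=c_v^Y$ then $X'_v=Y'_v$. Here $v$'s accept/reject decision is determined by comparing its (now common) proposal and current color against the current colors and proposals of its neighbors. A neighbor $u$ that is neither $v_0$ nor flipped contributes identically to the two chains, since all its relevant quantities agree. A neighbor that is $v_0$ or flipped can differ across the chains only in values lying in $\{r,b\}$; but as $v$'s current color and consistent proposal both avoid $\{r,b\}$, every conflict check involving such a neighbor evaluates the same in $X$ and in $Y$ (and for a node $v$ sampled consistently the breadth-first rule even guarantees no such neighbor exists). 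Hence $v$ makes the same decision and lands on the same color in both chains.

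Granted $v\in F^\ell$, the flip path falls out of the layer definition: from $F^\ell\subseteq M^\ell=N(F^{\ell-1})\setminus\bigcup_{i<\ell}M^i$ the node $v$ has a neighbor $v_{\ell-1}\in F^{\ell-1}$, and iterating down to $F^0=\{v_0\}$ produces adjacent nodes $v_i\in F^i$ forming the path. For the color relationship I would first note, by a short case check, that $X'_v\neq Y'_v$ together with $\{c_v^X,c_v^Y\}=\{r,b\}$ and $X_v=Y_v\notin\{r,b\}$ forces $v$ to be accepted in at least one chain (rejection in both would preserve its common color). Assuming $v$ is accepted in chain $X$, the acceptance rule applied to the path-neighbor $v_{\ell-1}$ gives $c_v^X\notin\{X_{v_{\ell-1}},c_{v_{\ell-1}}^X\}$; for $\ell\geq 2$ one has $X_{v_{\ell-1}}\notin\{r,b\}$ and $c_{v_{\ell-1}}^X=c_X\in\{r,b\}$, while for $\ell=1$ one has $X_{v_0}=c_X=r$, so in either case the binding constraint is $c_v^X\neq c_X$, forcing $c_v^X=c_Y$ and hence $c_v^Y=c_X$. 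The situation where $v$ is accepted only in chain $Y$ is symmetric under swapping $(X,r)\leftrightarrow(Y,b)$. This also explains why the lemma takes $c_X$ to be $v_{\ell-1}$'s proposal when $\ell\geq 2$ but its current color when $\ell=1$: these are precisely the red/blue values the neighbor forces $v$ to avoid.

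The step I expect to be the main obstacle is the reduction in the second paragraph, specifically the sub-case where $v$ is sampled mirroredly but its sampled color avoids $\{r,b\}$, so that $v\in M^\ell\setminus F^\ell$. In this sub-case $v$ genuinely has a flipped (or recolored $v_0$) neighbor, yet I must still rule out that any of that neighbor's red/blue proposals tip one of $v$'s conflict checks differently in the two chains. This is exactly where the hypothesis $v\in S$ is indispensable, since it keeps both of $v$'s anchors---its current color and its consistent proposal---away from red and blue.
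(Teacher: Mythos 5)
Your proof is correct and follows essentially the same route as the paper's: rule out that consistently-proposing nodes of $S$ can end up with different colors (using that both the current color and the consistent proposal of a node in $S$ avoid $\{r,b\}$, so red/blue discrepancies at flipped neighbors or at $v_0$ cannot tip any conflict check), extract the flip path from the layer definitions, and read off the color relation from the acceptance condition at $v_{\ell-1}$. The only difference is cosmetic: the paper additionally proves that flipped proposals are accepted in both chains or rejected in both, whereas you get by with acceptance in at least one chain plus a symmetry argument, which suffices for the stated conclusion.
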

\begin{proof}
We first argue that $v$'s proposals must be flipped and accepted in both chains. Trivially, acceptance of a consistent proposal in both chains or rejection in both chains leads to $X'_v=Y'_v$. 
Moreover, observe that flipped proposals are, by construction, either accepted in both or rejected in both chains, as flipping changes the role of red and blue, but not the overall behavior. 
Indeed, suppose, without loss of generality, that $c_v^X=c\in\{r,b\}$ is rejected by $X$. Thus, in particular, $v$ has a neighbor $u$ with current color or proposal $c$ in $X$. As we are restricting our attention to the set $S$ which does not have any adjacent node with current color red or blue, except for $v_0$, either $u=v_0$ or $u$ proposes $c$. So $u$ either must have different current colors (if $u=v_0$) or have mirrored proposals (if $v\in F^d$, then $u \in M^{d'}$ for some $d'\leq d+1$, because at the latest $v$'s flipped proposal leads to $u$ being added to the subsequent layer, by how we assign the proposals in breadth-first manner) and hence flipped proposals. Thus, $v$'s proposal $\overline{c}$ in $Y$ will be rejected by $Y$, since either $u=v_0\in N(v)$ has color $\overline{c}$ or $u\in N(v)$ proposes $\overline{c}$.

It thus remains to rule out the case of consistent proposals that are accepted in one and rejected in the other chain. 
Towards a contradiction, suppose that $v$ proposes the same color $c_v$ in both chains, and that it is accepted in one and rejected in the other. Since $X_v=Y_v$ and $c_v^X=c_v^Y$, this can happen only if $v$ is adjacent either to $v_0$ or to at least one node with flipped proposals, as otherwise all proposals and all current colors in $v$'s inclusive neighborhood would be the same, leading to the same behavior in both chains. In both cases, $v\in M^d$ for some $d\geq 1$, which means that its proposals are sampled mirroredly. Hence, $c_v \notin \{r,b\}$, as otherwise the proposals would be flipped. Now, since neither $v$'s current color nor $v$'s proposals is red or blue, and neighbors of $v$ can differ in their colors or proposals only if red or blue is involved, the proposals are either accepted or rejected in both chains. 
It follows that indeed only nodes in $S$ with flipped proposals that are accepted in both chains can have different colors in $X'$ and $Y'$. 

By construction of the layers, and since $v \in F^{\ell}$ for some $\ell \geq 1$, there must exist a sequence of nodes $v_1\in F^1, \dotsc, v_{\ell-1}\in F^{\ell-1}$ connecting $v_0$ to $v$ in $G$: a flip path of length $\ell$. Moreover, the proposal is accepted in a chain only if the proposed color is the opposite of the color (red or blue) that is seen on the path (either as proposal if $\ell >1$, or as current color of $v_0$ if $\ell=1$).
\end{proof}
\newpage

\begin{lemma}\label[lemma]{FlipPath2}
If $X'$ and $Y'$ differ at $v\neq v_0 \in K$, there is a path $(v_0, \dotsc, v_{\ell}=v)\in F^0\times \dotsb \times F^{\ell-1}\times  K$ of length $\ell\geq 1$ in $G$, called almost flip path, with the additional property that the proposal of $v$ is either red or blue, that is, $c_v=c_v^X=c_v^Y \in \{r,b\}$.
\end{lemma}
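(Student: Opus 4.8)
The plan is to mirror the proof of \Cref{FlipPath}, adapting it to the fact that a node $v \in K$ samples its proposal \emph{consistently}. First I would record the two structural facts that drive the argument: since $v \neq v_0$ we have $X_v = Y_v$, and since $v \in K$ the proposals are consistent, so $c_v := c_v^X = c_v^Y$. Consequently, if the proposal were accepted in both chains we would get $X'_v = Y'_v = c_v$, and if rejected in both we would get $X'_v = Y'_v = X_v$; hence $X'_v \neq Y'_v$ forces the Glauber accept/reject \emph{decision} at $v$ to differ between the two chains, say accepted in $Y$ but rejected in $X$.

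The heart of the argument is to extract from this discrepancy both that $c_v \in \{r,b\}$ and that $v$ has a neighbor in some layer $F^{\ell-1}$. I would observe that in $v$'s inclusive neighborhood the only coordinates on which the data fed into the accept rule can differ between $X$ and $Y$ are the current color of $v_0$ (red in $X$, blue in $Y$) and the proposals of genuinely flipped neighbors $u$, for which $\{c_u^X, c_u^Y\} = \{r,b\}$; every such discrepant value lies in $\{r,b\}$. Inspecting the two clauses of the accept rule, $c_v \notin \{X_u, c_u\}$ and $c_u \notin \{X_v, c_v\}$, the decision can toggle only if a red-or-blue discrepant value enters or leaves one of the sets $\{X_u, c_u\}$ or $\{X_v, c_v\}$; since both $X_v = Y_v$ and $c_v$ are common to the two chains, this is possible only if $c_v \in \{r,b\}$ (for the first clause directly; for the second clause after ruling out $X_v \in \{r,b\}$).

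To rule out $X_v \in \{r,b\}$, i.e.\ $v \notin B$, I would combine the definition of $S$ with properness of the two colorings. If the discrepant neighbor is a genuinely flipped node $u$, then flipped proposals force $u \in S$, so $u$ is not adjacent to any red-or-blue node other than $v_0$; since $v \sim u$ and $v \neq v_0$ this gives $X_v \notin \{r,b\}$. If instead the discrepant neighbor is $v_0$, then $v \sim v_0$, and properness gives $X_v \neq X_{v_0} = r$ and $X_v = Y_v \neq Y_{v_0} = b$, again $X_v \notin \{r,b\}$. With $X_v \notin \{r,b\}$ the second clause likewise forces $c_v \in \{r,b\}$, establishing the additional property claimed in the statement.

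Finally, I would build the almost flip path. The discrepant neighbor is either $v_0 \in F^0$, giving $\ell = 1$, or a genuinely flipped node $u$; in the latter case flipped proposals arise only from mirrored sampling, so $u \in M^d$ for some $d \geq 1$ and, being flipped, $u \in F^d$. Exactly as in \Cref{FlipPath}, tracing back through the breadth-first layers yields a flip path $(v_0, \dotsc, v_{\ell-1}) \in F^0 \times \dotsb \times F^{\ell-1}$ ending at this neighbor $v_{\ell-1}$, and appending $v = v_\ell \in K$ produces the desired almost flip path $(v_0, \dotsc, v_{\ell-1}, v) \in F^0 \times \dotsb \times F^{\ell-1} \times K$ of length $\ell \geq 1$. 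I expect the main obstacle to be the toggling analysis of the second paragraph, namely carefully confirming that a consistent proposal with $c_v \notin \{r,b\}$ can never be accepted in one chain and rejected in the other; this is precisely the point where both the constraint defining $S$ and the properness of $X$ and $Y$ are genuinely needed.
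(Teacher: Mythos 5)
Your proof takes essentially the same route as the paper's: a consistent proposal at $v\in K$ forces the accept/reject \emph{decision} itself to differ, which requires a discrepant neighbor (namely $v_0$ or a flipped node, giving the almost flip path through the layers $F^d$) and forces $c_v\in\{r,b\}$. Where you go further is at the one step the paper merely asserts, namely that a differing decision on a consistent proposal requires $c_v\in\{r,b\}$. Your clause-by-clause analysis correctly observes that the second clause $c_u\notin\{X_v,c_v\}$ can also toggle when $X_v\in\{r,b\}$, so $X_v\notin\{r,b\}$ must be excluded; your argument for the case of a flipped neighbor $u$ (that $u\in M^d\subseteq S$ for $d\ge 1$, hence $v\notin B$) is exactly the paper's parenthetical remark that the endpoint of an almost flip path lies in $K\setminus B$, and is sound. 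The one caveat is your appeal to properness when the discrepant neighbor is $v_0$: \Cref{pathcoupling} as stated ranges over all pairs in $[q]^V$ at Hamming distance one, and the paper explicitly allows improper colorings, so properness of $X$ and $Y$ on the edge $\{v,v_0\}$ is not automatic. Indeed, if $v\sim v_0$, $X_v=Y_v=r=X_{v_0}$, $v_0$ is unmarked, and $c_v\notin\{r,b\}$, then $X$ rejects (second clause with $u=v_0$) while $Y$ may accept, so the stated conclusion can fail and some properness-type hypothesis is genuinely needed for this case. This is not a defect of your write-up relative to the paper --- the paper's one-sentence justification of this step silently needs the same assumption --- and your version has the virtue of making that dependence explicit.
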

\begin{proof}
Since, by definition of the coupling, $v\in K$ samples its proposals consistently, $X'$ and $Y'$ can only differ at $v\neq v_0$ if the proposal is accepted in one and rejected in the other chain.
This can happen only if $v$ is adjacent to either $v_0$ or to at least one node with flipped proposals. Otherwise, all proposals and all current colors in $v$'s inclusive neighborhood would be the same, leading to the same behavior. Hence, $v$ is adjacent to some $u\in F^d$ for some $d \geq 0$. By construction of the layers, there must exist a sequence of nodes $v_1\in F^1, \dotsc, v_{\ell-1}=u\in F^{\ell-1}$ connecting $v_0$ to $v$ in $G$: an almost flip path of length $\ell=d+1$. Note that, in particular, because neighbors of nodes in $B$ are by definition sampled consistently (as they are in $K$), and a node at the end of an almost flip path has a neighbor with flipped proposals, this last node on an almost flip path must be in $K\setminus B$. 

The proposal $c_v$ is accepted in one and rejected in the other chain only if $c_v \in \{r,b\}$. In that case, the chain with the same color on the end of the path will reject, the other will (possibly) accept. 
\end{proof}

\begin{figure}[ht]
		\begin{center}
			\includegraphics[width=0.9\textwidth]{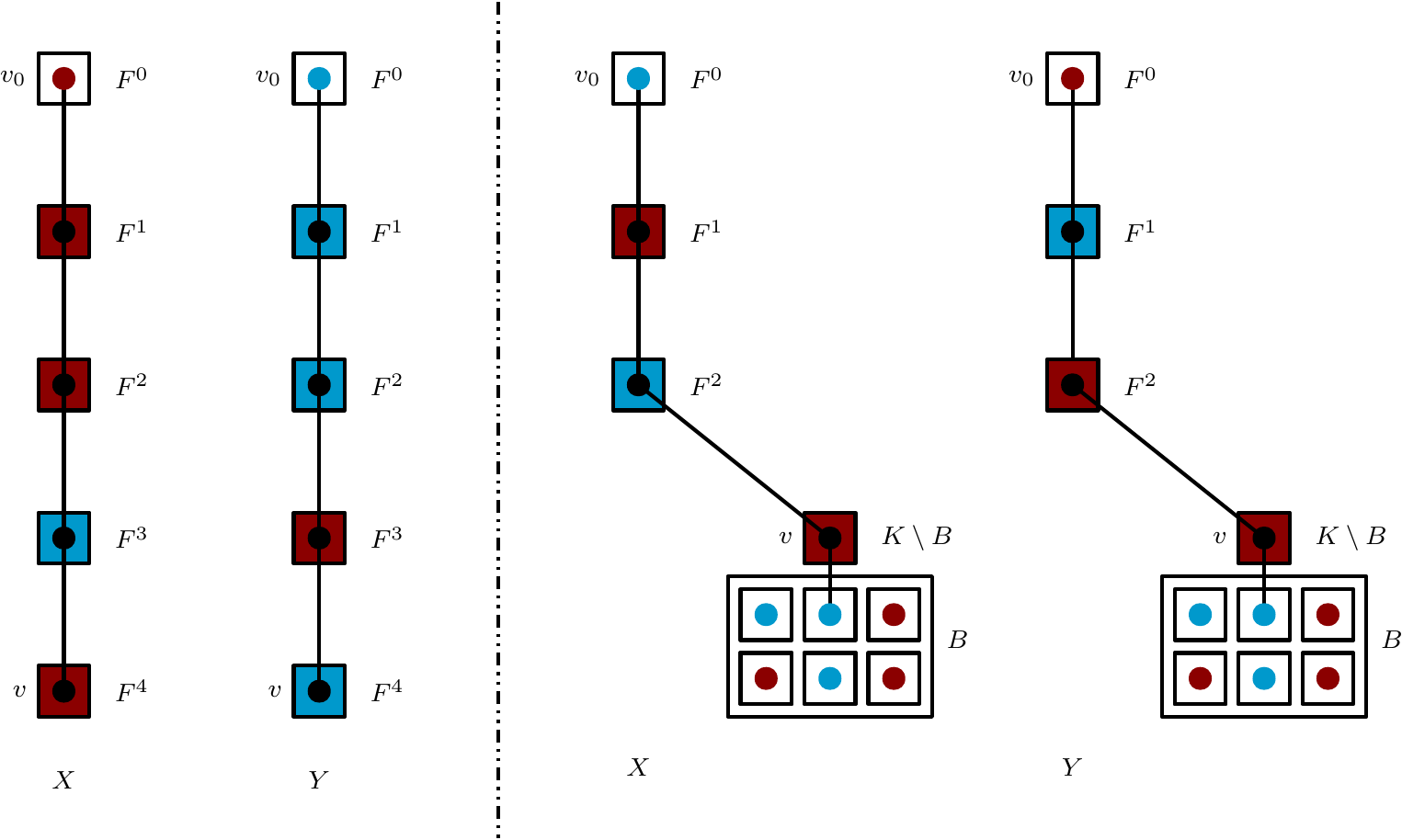}
		\end{center}
	\caption{A flip path on the left: $v$'s flipped proposals are accepted in both chains, yielding $X'_v=r$ and $Y'_v=b$.
	\\ An almost flip path on the right: $v\in K\setminus B$ samples its proposals consistently. In chain $X$, the proposal $r$ will be accepted, in chain $Y$, it will be rejected, leading to $X'_v=r\neq Y'_v=Y_v$. 
	The disk color corresponds to the node's current color, where black means any color except red and blue. The color of the box around a node indicates this node's proposed color, where white means any color ( also red and blue, but consistent).
	}	
	\label{fig2}
\end{figure}

\subsection{Bounding the Expected Number of Differing Nodes}\label[section]{EDiff}
We show that $\mathbb{E}[\phi(X',Y')\mid X,Y] \leq 1-\delta$ for some $0<\delta < 1$, by bounding the expectations $\mathbb{E}[\sum_{v\neq v_0 \in V}1\left(X'_v\neq Y'_v\right)\mid X,Y]$ and $\mathbb{E}[1\left(X'_{v_0}\neq Y'_{v_0}\right)\mid X,Y]$ separately. We will see that, as $\delta \rightarrow 0$, both terms can be bounded by  $\approx \frac{1}{\alpha}$, leading to an expected number of roughly $\frac{2}{\alpha}$, which is strictly less than 1 for $\alpha>2$.

\paragraph{Nodes $v\neq v_0$} 
\Cref{Properties}, or more precisely, \Cref{FlipPath,FlipPath2}, show that the number of nodes (different from $v_0$) that have different colors in $X'$ and $Y'$ can be bounded by the number of (almost) flip paths with an additional property. We will next see that the expected number of such (almost) flip paths can be expressed as a geometric series summing over the depths of the layers.

There are at most $\Delta^{\ell}$ paths $(v_0, \dotsc,v_{\ell})$ of length $\ell$ in $G$. Moreover, each such path has probability $\left(2\gamma/q\right)^{\ell-1} \gamma/q$ of being a flip or almost flip path with the mentioned additional property, since all intermediate nodes $v_1, \dotsc, v_{\ell-1}$ need to mark themselves and to propose one arbitrary color in $\{r,b\}$, and $v_{\ell}$ needs to mark itself and to propose the one color in $\{r,b\}$ as specified in \Cref{FlipPath,FlipPath2}, respectively. Note that a path in $G$ can either be a flip path or an almost flip path, but never both. Moreover, observe that node $v_0$ does not need to be marked.  
We get 
\begin{equation}\label{paths}\mathbb{E}\left[\sum_{v \neq v_0\in V} 1(X'_v\neq Y'_v) ~ \middle | ~X,Y\right]\leq \sum_{\ell=1}^{\infty}\Delta^{\ell}\cdot \left(\frac{2 \gamma}{q}\right)^{\ell-1}\cdot \frac{\gamma}{q} = \frac{1}{2}\sum_{\ell=1}^{\infty}  \left(\frac{2 \gamma \Delta}{q}\right)^{\ell} \leq \frac{\frac{\gamma \Delta}{q}}{1-\frac{2\gamma \Delta}{q}}.\end{equation}

\paragraph{Node $v_0$} Chains $X'$ and $Y'$ can agree at node $v_0$ only if at least one the proposals is accepted. For that, $v_0$ needs to be marked and its proposal $c_{v_0}=c_{v_0}^X=c_{v_0}^Y$ needs to be different from all the at most $\Delta$ current colors of its neighbors, that is, $c_v\notin \bigcup_{v\in N(v_0)}\{X_v\}$, which happens with probability at least $\gamma\left(1-\Delta/q\right)$. Moreover, the proposals of $v_0$'s neighbors (if marked) need to avoid at most three colors in $\{c_{v_0},r,b\}$, possibly less, which happens with probability at least $1-3\gamma/q$.
We thus get \begin{equation}\label{v0only}\mathbb{E}\left[1\left(X'_{v_0}\neq Y'_{v_0}\right)\right]\leq 1-\gamma\left(1-\frac{\Delta}{q}\right)\left(1-\frac{3\gamma}{q}\right)^{\Delta}.
\end{equation} 

\paragraph{Wrap-Up}
Overall, combining \Cref{paths,v0only}, we get
\begin{equation*}\begin{aligned}\mathbb{E}[\phi(X',Y') \mid X,Y]& 
\leq 1-\gamma\left(1-\frac{1}{\alpha}\right)e^{-\frac{6\gamma}{\alpha}} + \frac{\frac{\gamma}{\alpha}}{1-\frac{2\gamma}{\alpha}} 
=1 -\gamma e^{-\frac{6\gamma}{\alpha}}\left( 1-\frac{1}{\alpha}\left(1+\frac{e^{\frac{6\gamma}{\alpha}}}{1-\frac{2\gamma}{\alpha}}\right)\right).
\end{aligned}\end{equation*}
For $\alpha > 2$ and $\gamma:=\gamma(\alpha)$ small enough, this is strictly bounded away from 1 from above, where the hidden constant depends on $\alpha$ (but not on $\Delta$ or $n$). 

\bibliographystyle{alpha}
\bibliography{ref}

\end{document}